\providecommand{\abs}[1]{|#1|}
\providecommand{\norm}[1]{\lVert#1\rVert}
\newcommand*\oline[1]{%
  \vbox{%
    \hrule height 0.45pt %                    
    \kern0.19ex%                          
    \hbox{%
      \kern-0.225em%                        
      \ifmmode#1\else\ensuremath{#1}\fi%
      \kern-0.01em%
    }
  }
}
\newtheorem{thm}{Theorem}
\newtheorem{lem}{Lemma}
\newtheorem{defn}{Definition}
\newtheorem{prop}{Proposition}
\providecommand{\Bupp}{\oline{B}}
\providecommand{\B}{\ushort{B}}
\providecommand{\BU}{\ushort{B}_U}
\providecommand{\BL}{\ushort{B}_L}
\providecommand{\marker}{0.4}
\providecommand{\pvar}{0.7}
\providecommand{\plab}{0.7}
\providecommand{\intc}{0.5}
\providecommand{\figwid}{8.4cm}
\DeclareMathOperator{\Proj}{Proj}
\title{Adaptive Control of Scalar Plants in the Presence of Unmodeled Dynamics}
\author{Heather~S.~Hussain, Megumi~M.~Matsutani, Anuradha~M.~Annaswamy and
       Eugene~Lavretsky
\thanks{H.~H. Hussain and A. M. Annaswamy are with the Department
of Mechanical Engineering, Massaschusetts Institute of Technology, Cambridge,
MA, 02139 e-mail: ({hhussain@mit.edu}).}% <-this % stops a space
\thanks{M.~M. Matsutani is with the Department
of Aeronautics and Astronautics, Massaschusetts Institute of Technology, Cambridge,
MA, 02139}% <-this % stops a space
\thanks{E. Lavretsky is with The Boeing Company, Huntington Beach, CA
92648}% <-this % stops a space
\thanks{This work is supported by the Boeing Strategic University Initiative.}}
\begin{document}

\maketitle

\begin{abstract}                % Abstract of not more than 250 words.
Robust adaptive control of scalar plants in the presence of unmodeled dynamics is established in this paper. It is shown that implementation of a projection algorithm with standard adaptive control of a scalar plant ensures global boundedness of the overall adaptive system for a class of unmodeled dynamics.
\end{abstract}

\section{Introduction} \label{sec:1}
Following the Rohrs counterexample in \cite{Rohrs}, several robust adaptive control solutions were suggested in the '80s and '90s (see, for example, \cite{1} and \cite{5}), including specific responses to the counterexample (see for example \cite{3}, \cite{7}, \cite{11}, \cite{5}, and \cite{10}). Most of these were qualitative, or local, and often involved properties of persistent excitation of the reference input. In this paper, we show that for a class of unmodeled dynamics including the one in \cite{Rohrs}, adaptive control of a scalar plant with global boundedness can be established for any reference input.
\tikzstyle{block} = [draw, fill=Aquamarine!50, rectangle, 
    minimum height=1cm, minimum width=2cm, rounded corners=0.5mm, inner sep= 2mm]
\tikzstyle{sum} = [draw, fill=Dandelion!40, circle, minimum width=1cm, inner sep= 1mm]
\tikzstyle{feedback} = [draw, fill=Green!40, circle, minimum width=1cm]
\tikzstyle{input} = [coordinate]
\tikzstyle{output} = [coordinate]
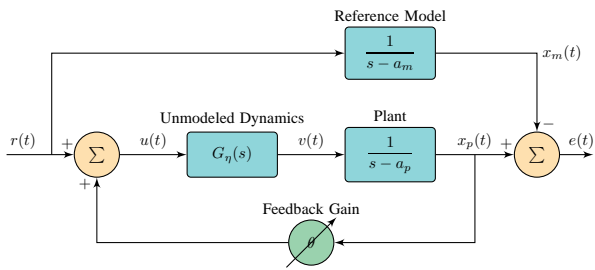
\begin{figure}[h!]
\begin{center}
\begin{tikzpicture}[auto, scale=0.6, every node/.style={transform shape}, node distance=1.0cm, >=latex']
\node [input] (input1) {};
\node [input, right of= input1] (input2) {};    
\node [sum, right of=input2] (sum1) {$\sum$} ;
\node [block, right of=sum1, label=above: {Unmodeled Dynamics}, node distance=3.0cm] (block1) {$G_{\eta}(s)$};
\node [block, right of=block1, label=above:{Plant},
            node distance=3.5cm] (block2) {$\displaystyle \frac{1}{s-a_p}$};
\node [block, above of=block2, label=above:{Reference Model}, node distance=2.25cm] (block3) {$\displaystyle \frac{1}{s-a_m}$};
\node [sum, right of=block2,node distance=3.25cm] (sum2) {$\sum$};
\node [output, right of=sum2,node distance=1.25cm] (output1) {};

    % We draw an edge between the controller and system block to 
    % calculate the coordinate u. We need it to place the measurement block. 
    \draw [->] (block1) -- node[name=u] {$v(t)$} (block2);
    \node [output, right of=block2] (output2) {};
    \node [feedback, below of=u, label=above:{Feedback Gain}, node distance=2.25cm] (block4) {${\theta}$};
    % Once the nodes are placed, connecting them is easy. 
	\draw [->] (input1) -- node[near start]{$r(t)$} node[pos=0.9] {$+$} (sum1);
    \draw [->] (sum1) -- node {$u(t)$} (block1);
    \draw [->] (block2) -- node [name=y] {$x_p (t)$} node[pos=0.9] {$+$}(sum2);
    \draw [->] (y) |- (block4);
    \draw [->] (block4) -| node[pos=0.95] {$+$} node [pos=0.075] {} (sum1);
    \draw [->] (input2) |- (block3);
	\draw [->] (block3) -| node {$x_m (t)$} node[pos=0.95] {$-$}(sum2);
	\draw [->] (sum2) --  node[pos=0.65]{$e(t)$}(output1);
		\draw[->] (6.2,-2.5)--($(block4.225)!1.3cm!(block4.45)$);
\end{tikzpicture}
\caption{Adaptive control in the presence of unmodeled dynamics} 
\label{fig:ControlProb}
\end{center}
\end{figure}

\section{The Problem Statement: Scalar Plant} \label{sec:2}
The problem we address in this paper is the adaptive control of a first-order plant 
\begin{equation} \label{eq:plant}
\dot{x}_p(t)= a_p x_p(t)+v(t)
\end{equation}
where $a_p$ is an unknown parameter. It is assumed that $|a_p| \leq \bar{a}$, where $\bar{a}$ is a known positive constant. The unmodeled dynamics are unknown and defined as 
\begin{equation}\label{eq:xetadot}
\begin{aligned} 
\dot{x}_\eta (t)&=A_\eta x_\eta (t) + b_\eta u(t) \\
v(t)&={c_\eta}^T x_\eta(t)
\end{aligned}
\end{equation}
where $A_\eta \in \mathbb{R}^{n \mathsf{x} n}$ is Hurwitz with 
\begin{equation} \label{eq:Geta}
G_\eta (s) \triangleq c_\eta^T(sI_{n \mathsf{x} n}-A_\eta)^{-1} b_\eta\text{.}
\end{equation}
$x_\eta(t)$ is the state vector, and $u(t)$ is the control input. The goal is to design the control input such that $x_p(t)$ follows $x_m(t)$ which is specified by the reference model
\begin{equation}  \label{eq:refmod}
\dot{x}_m(t)= a_m x_m(t)+r(t)
\end{equation}
where $a_m < 0$, and $r(t)$ is the reference input. The adaptive controller we propose is a standard adaptive control input given by (see figure \ref{fig:ControlProb})
\begin{equation} \label{eq:controlin}
u(t)=\theta(t)x_p(t)+r(t)
\end{equation}
where the parameter $\theta(t)$ is updated using a projection algorithm given by
\begin{equation} \label{eq:adaptlaw}
\dot{\theta}(t)=\gamma \Proj(\theta(t),-x_p(t)e(t)), \; \gamma >0
\end{equation}
where
\begin{equation}
e(t)=x_p(t)-x_m(t) \label{eq:e}
\end{equation}
\begin{align} \label{eq:Proj}
\Proj(& \theta,y)=
 \begin{cases} 
 \displaystyle \frac{\theta_{max}^2-{\theta^2}}{{\theta_{max}^2}-{\theta_{max}^{\prime 2}}}y & [\theta \in \Omega_A, \; y\theta>0]\\\\
y &\text{otherwise}
\end{cases}
\end{align}
\begin{align} \label{eq:omegasets}
\Omega_0 &= \lbrace\theta \in \mathbb{R}^1 \; \lvert \; -\theta_{max}^{\prime} \leq \theta \leq \theta_{max}^{\prime} \rbrace \nonumber \\
\Omega_1 &= \lbrace \theta \in \mathbb{R}^1 \; \lvert \; -\theta_{max} \leq \theta \leq \theta_{max}\rbrace \\
\Omega_A&=\Omega_1 \backslash \Omega_0 \nonumber
\end{align}
with positive constants $\theta_{max}^{\prime}$ and $\theta_{max}$ given by
\begin{equation}  \label{eq:thetaminor}
\theta_{max}^{\prime} >\bar{a}+ |a_m|
\end{equation}
\begin{equation} \label{eq:thetamajor}
\theta_{max}=\theta_{max}^{\prime} +\varepsilon_0, \quad \varepsilon_0 >0.
\end{equation}
\begin{lem}\label{lem:1}
 Consider the Adaptive Law in (\ref{eq:adaptlaw}) with Projection Algorithm in (\ref{eq:Proj}) to (\ref{eq:thetamajor}). Then,
\begin{equation} \label{eq:lem1}
\norm{\theta(t_a)} \leq \theta_{max}  \Longrightarrow \norm{\theta(t)} \leq \theta_{max}, \; \forall t \geq t_a \text{.}
\end{equation}
Hence, the projection algorithm guarantees the boundedness of
the parameter $\theta(t)$ independent of the system dynamics. We refer the reader to \cite{Proj} for the proof of Lemma \ref{lem:1}.
\end{lem}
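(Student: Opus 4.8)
The plan is to show that the closed band $\Omega_1 = \{\, |\theta| \le \theta_{max}\,\}$ is positively invariant for the scalar equation (\ref{eq:adaptlaw}), regarding the driving term $y(t) = -x_p(t)e(t)$ as an exogenous signal that is locally bounded along the trajectory (being continuous in the state); this is the precise sense in which the bound is ``independent of the system dynamics.'' First I would record that the right-hand side $\gamma\,\Proj(\theta,y)$ is continuous in $\theta$ on $\Omega_1$: the two branches of (\ref{eq:Proj}) agree on the inner interface $|\theta| = \theta_{max}^{\prime}$, where the scaling factor equals $(\theta_{max}^2 - \theta_{max}^{\prime 2})/(\theta_{max}^2 - \theta_{max}^{\prime 2}) = 1$ and matches the unprojected branch, and they agree on the surface $y\theta = 0$, where both vanish. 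This makes $\theta(\cdot)$ well defined and absolutely continuous for as long as it remains in $\Omega_1$, so it suffices to preclude exit through the outer boundary $|\theta| = \theta_{max}$.

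The key step is a direct energy estimate. Let $V(t) = \tfrac{1}{2}\theta^2(t)$, so that $\dot V = \gamma\,\theta\,\Proj(\theta,y)$, and I would show $\dot V \le 0$ whenever $|\theta| = \theta_{max}$, splitting on the two branches of (\ref{eq:Proj}). If the raw update points outward, i.e.\ $y\theta > 0$, then $\theta \in \Omega_A$ and the factor $(\theta_{max}^2 - \theta^2)/(\theta_{max}^2 - \theta_{max}^{\prime 2})$ vanishes identically at $\theta^2 = \theta_{max}^2$, so $\dot\theta = 0$ and $\dot V = 0$. If instead $y\theta \le 0$, the unprojected branch applies and $\dot V = \gamma\,\theta y \le 0$. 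In either case the outward radial velocity on the outer boundary is nonpositive, which is exactly the property the scaling factor in (\ref{eq:Proj}) is engineered to enforce.

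From here I would conclude (\ref{eq:lem1}) by contradiction. Assuming $|\theta(t_a)| \le \theta_{max}$ but $|\theta(t_1)| > \theta_{max}$ for some $t_1 > t_a$, I take $t_0$ to be the last instant before $t_1$ at which $|\theta| = \theta_{max}$, which localizes the analysis to the behavior near the outer boundary. I would combine this with the observation that on $\Omega_A$ with $y\theta > 0$ the dynamics take the logistic-type form $\dot\theta = \gamma\,(\theta_{max}^2 - \theta^2)\,w(t)$, where $w(t) = y(t)/(\theta_{max}^2 - \theta_{max}^{\prime 2})$ is locally bounded and shares the sign of $\theta$; a Gr\"onwall/comparison estimate on $\theta_{max} - |\theta|$ then shows the boundary is approached only asymptotically and cannot be reached from the interior in finite time, yielding the contradiction.

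I expect the main obstacle to lie entirely in this boundary analysis. Because the projected velocity on $|\theta| = \theta_{max}$ is exactly zero rather than strictly inward, one cannot simply invoke strict decrease of $V$; the argument must instead exploit the continuous degeneration of the outward speed to zero near $\pm\theta_{max}$. A further subtlety is that the vector field is discontinuous across the outer boundary — it jumps from $0$ to the unprojected value $\gamma y$ on the exterior $|\theta| > \theta_{max}$ — so some care is required to argue from the interior dynamics and, in particular, to handle the limiting case $|\theta(t_a)| = \theta_{max}$ admitted by the non-strict hypothesis in (\ref{eq:lem1}).
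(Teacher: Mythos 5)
The paper itself does not prove Lemma~\ref{lem:1}; it defers to \cite{Proj}, so your sketch can only be compared against the standard projection-operator argument given there. For initial conditions \emph{strictly} inside $\Omega_1$ your plan is sound: the continuity check across $\abs{\theta}=\theta_{max}^{\prime}$ and across $y\theta=0$, the vanishing outward speed at $\abs{\theta}=\theta_{max}$, and a comparison estimate of the form $\tfrac{d}{dt}\left(\theta_{max}-\abs{\theta}\right)\geq -C\left(\theta_{max}-\abs{\theta}\right)$ (valid while $\theta\in\Omega_1$, with $C$ a local bound built from $\gamma$, $\abs{y}$, and the scaling denominator) do show that a trajectory with $\abs{\theta(t_a)}<\theta_{max}$ cannot reach the outer boundary in finite time.

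However, the ``subtlety'' you flag at the end is not a removable technicality---it defeats your contradiction argument, and it is exactly the case forced on you by the non-strict hypothesis $\norm{\theta(t_a)}\leq\theta_{max}$. After your last-crossing time $t_0$ the trajectory lies \emph{outside} $\Omega_1$, where the paper's definition (\ref{eq:Proj}) reverts to the unprojected field $\dot\theta=\gamma y$ (the branch condition requires $\theta\in\Omega_A$, and $\Omega_A$ is bounded), so your Gr\"onwall estimate, which lives entirely in the interior, yields no contradiction there. Worse, with $y$ treated (as you treat it) as an arbitrary exogenous signal, the literal statement is false for boundary initial data: take $\theta(t_a)=\theta_{max}$ and $y(t)=t-t_a$; then $\theta(t)=\theta_{max}+\tfrac{\gamma}{2}(t-t_a)^2$ is a classical solution of (\ref{eq:adaptlaw})---at $t_a$ one has $y\theta=0$, hence $\dot\theta=\gamma y=0$, while for $t>t_a$ one has $\theta\notin\Omega_A$, hence $\dot\theta=\gamma y$---and it escapes $\Omega_1$, even though $\theta\equiv\theta_{max}$ solves the same initial-value problem. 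The proof in \cite{Proj} avoids this entirely: there the projection is defined through a convex function $f$ whose modification is active whenever $f(\theta)>0$, an \emph{unbounded} region containing the exterior of $\Omega_1$, so that $\tfrac{d}{dt}f(\theta(t))\leq 0$ whenever $f(\theta(t))\geq 1$; invariance of $\lbrace f\leq 1\rbrace$ then follows from a one-line last-crossing monotonicity argument, with no Gr\"onwall estimate and no boundary degeneracy to fight. To repair your proof you must make the same move: replace ``$\theta\in\Omega_A$'' in (\ref{eq:Proj}) by ``$\abs{\theta}>\theta_{max}^{\prime}$'', so the scaling factor (negative for $\abs{\theta}>\theta_{max}$) keeps acting outside and the field never points outward on $\abs{\theta}\geq\theta_{max}$; with that definition your energy/monotonicity step closes the argument immediately.
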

\section{Choice of Projection Parameters} \label{sec:3}
The projection algorithm in (\ref{eq:Proj}) is specified by two parameters $\theta_{max}^{\prime} $ and $\theta_{max}$. Equation (\ref{eq:thetaminor}) provides the condition for $\theta_{max}^{\prime} $. To determine $\varepsilon_0$ in (\ref{eq:thetamajor}), the following discussions are needed:

We consider the linear time-invariant system specified by (\ref{eq:plant}), (\ref{eq:xetadot}), and (\ref{eq:controlin}), with the parameter $\theta(t)$ fixed as
\begin{equation} \label{eq:thetafixed}
\theta(t)=-\theta_{max}, \quad \forall t \geq t_a \text{.}
\end{equation}
The closed loop transfer function from $r(t)$ to $x_p(t)$ is given by
\begin{equation} \label{eq:Gcl}
G_c (s)= \frac{p_\eta (s)}{q_c (s)}
\end{equation}
where $G_c(s)$ is defined using $G_\eta(s)$ in (\ref{eq:Geta}) as
\begin{align}
G_\eta (s) & \triangleq \frac{p_\eta (s)}{q_\eta (s)} \label{eq:Getacl} \\
q_c(s) & = q_\eta (s) (s-a_p)+ \theta_{max} p_\eta (s). \label{eq:qc}
\end{align}
From (\ref{eq:thetaminor}) and (\ref{eq:thetamajor}), it follows that
\begin{equation}
a_p-\theta_{max}<0, \quad \forall \abs{a_p} \leq \bar{a}.
\end{equation}
Therefore it follows that there exists a class of unmodeled dynamics $(c_\eta, A_\eta, b_\eta)$ such that $q_c (s)$ has roots in $\mathbb{C}^-$, the left-half of the complex plane. It is this class that is of interest in this paper.
\begin{defn} \label{def:1}
The triple $(c_\eta, A_\eta, b_\eta)$ is said to belong to \newline $S_\eta(\bar{a}, \theta_{max})$ if $p_\eta (s)$ and $q_\eta(s)$ in (\ref{eq:Getacl}) are such that the roots of $q_c(s)$ in (\ref{eq:qc}) lie in $\mathbb{C}^-$ for all $|a_p|\leq \bar a$.
\end{defn}
Let's demonstrate $S_\eta(\bar{a}, \theta_{max})$ with an example. Consider the class of unmodeled dynamics of the form
\begin{equation} \label{eq:Geta2ndorder}
G_\eta (s) = \frac{\omega_n^2}{s^2+2\zeta\omega_ns +\omega_n^2}
\end{equation}
where $\zeta>0$ and $\omega_n>0$. From (\ref{eq:Gcl}), (\ref{eq:qc}), and (\ref{eq:Geta2ndorder}), the closed-loop dynamics from $r$ to $x_p$ is given by
\begin{equation} \label{eq:Gcl_2ndorder}
G_c (s)= \frac{\omega_n^2}{q_c (s)}
\end{equation}
where
\begin{equation} \label{eq:qc_Gcl}
\begin{aligned}
q_c(s) & = s^3 + a_1 s^2 + a_2 s +a_3 \\
a_1&= (2 \zeta \omega_n - a_p)\\
a_2&= (\omega_n^2-2a_p \zeta \omega_n) \\
a_3&= -a_p \omega_n^2+\theta_{max}\omega_n^2
\end{aligned}
\end{equation}
For the roots of $q_c(s)$ in (\ref{eq:qc_Gcl}) to lie in $\mathbb{C}^-$, the following conditions are neccessary and sufficient for all $|a_p|\leq \bar a$:
\begin{enumerate}[itemindent=0.1in, label=(A-{\roman*})]
%[\text{A-}(i)]
\item $a_p < \min(2 \zeta \omega_n,\displaystyle \frac{\omega_n}{2\zeta})$ \label{cond1}
\item $ \theta_{max} > a_p$ \label{cond2}
\item $\theta_{max} < \displaystyle \bigg(-4a_p\zeta^2 + \frac{2\zeta a_p^2}{\omega_n} + 2\zeta\omega_n \bigg)$ \label{cond3}
\end{enumerate}
If 
\begin{equation} \label{eq:ineq1}
a_p < \theta_{max} <a_p+ \bar\theta^\star
\end{equation}
where
\begin{equation} \label{eq:cond3}
\bar\theta^\star=(2\zeta\omega_n-a_p)(1-\frac{2\zeta a_p}{\omega_n})
\end{equation}
then conditions \ref{cond2} and \ref{cond3} hold.

Hence, any class of unmodeled dynamics $(c_\eta, A_\eta, b_\eta)$ in (\ref{eq:Geta2ndorder}) satisfying condition \ref{cond1} belongs to $S_\eta(\bar{a}, \theta_{max})$. It can be easily shown that the unmodeled dynamics and the plant discussed in the infamous Rohrs counterexample \cite{Rohrs} satisfies conditions \ref{cond1} to \ref{cond3} above for some $\theta_{max}$.

We now discuss the choice of $\varepsilon_0$. Consider the class of unmodeled dynamics $S_\eta(\bar{a}, \theta_{max})$ in Definition \ref{def:1}. Since the closed loop system specified by (\ref{eq:plant}), (\ref{eq:xetadot}), (\ref{eq:controlin}), and (\ref{eq:thetafixed}) is stable, it follows that there exists a Lyapunov function
\begin{equation} \label{eq:lyap}
V= \bar{x}^T P \bar{x}
\end{equation}
with a time derivative
\begin{equation} \label{eq:lyapderiv}
\dot{V}= -\bar{x}^T Q \bar{x}
\end{equation}
where $\bar{x}=[x_p \; x_\eta^T]^T$. $P$ is the solution to the Lyapunov equation
\begin{equation} \label{eq:lyap2PQ}
\bar{A}^T P + P \bar{A} = -Q < 0
\end{equation}
where
\begin{equation}
\bar{A}=
\begin{bmatrix} 
a_p & c_\eta^T \\ 
-b_\eta \theta_{max} & A_\eta
\end{bmatrix}\
\end{equation}
since $\bar{A}$ is Hurwitz. The latter is true since $\theta_{max}$ satisfies (\ref{eq:ineq1}).

We define two sets $\Omega_u \subset \Omega_A$ and $\Omega_l \subset \Omega_A$ as
\begin{align}
\Omega_u & = \lbrace \theta \in \mathbb{R}^1 \; \lvert \; -\theta_{max}+\xi_0 \leq \theta < -\theta_{max}^{\prime} \rbrace \label{eq:omegau}\\
\Omega_l & = \lbrace\theta \in \mathbb{R}^1 \; \lvert \; -\theta_{max} \leq \theta \leq -\theta_{max}+\xi_0 \rbrace \label{eq:omegal}
\end{align}
where \begin{equation} \label{eq:varth}
\xi_0=c \varepsilon_0, \quad c \in (0,1).
\end{equation}
We now consider the linear time-varying system specified by (\ref{eq:plant}), (\ref{eq:xetadot}), and (\ref{eq:controlin}), with $\theta(t) \in \Omega_u \cup \Omega_l$. It follows from (\ref{eq:thetamajor}) and (\ref{eq:lem1}) that
\begin{align}
\theta(t) &=-\theta_{max}+\varepsilon(t), \quad \forall \theta(t) \in \Omega_u \cup \Omega_l \label{eq:theps} \\
\theta(t) &=-\theta_{max}+\xi(t), \quad \forall \theta(t) \in \Omega_l	\label{eq:thevar}
\end{align}
where
\begin{equation} \label{eq:varth_t}
\varepsilon(t) \in [0,\varepsilon_0), \quad \xi(t) \in [0,\xi_0].
\end{equation}
Therefore, the closed-loop system is given by
\begin{equation} \label{eq:LTVsys}
\dot{\bar{x}}=\bar{A}\bar{x}+A_\xi(t)\bar{x}+\bar{b}r, \quad \forall \theta(t) \in \Omega_l
\end{equation}
where
\begin{equation} \label{eq:LTV_A}
A_\xi (t)=
\begin{bmatrix} 
0 & 0 \\ 
b_\eta \xi(t) & 0
\end{bmatrix}, \quad \bar{b}=
\begin{bmatrix} 
0 \\ 
b_\eta
\end{bmatrix}.
\end{equation}
If we choose $V=-\bar{x}^T Q \bar{x}$ with $P$ as in (\ref{eq:lyap2PQ}), we obtain
\begin{equation} \label{eq:lyapderivLTV}
\dot{V} \leq - \lambda_{Q_{min}}\norm{\bar{x}}^2+2\lambda_{P_{max}} k \xi_0 \norm{\bar{x}}^2 +2\lambda_{P_{max}}\norm{\bar{b}}r_{max}\norm{\bar{x}} \\
\end{equation}
where
\begin{align} \label{eq:lamdamax}
\begin{aligned}
\lambda_{Q_{min}} & \triangleq \min\limits_{i} |\Re (\lambda_i(Q))| \\
\lambda_{P_{max}} & \triangleq \max\limits_{i} |\Re (\lambda_i(P))|
\end{aligned}
\end{align}
\begin{equation}
\norm{b_\eta} \leq k, \quad r_{max}=\max\limits_{t \geq t_a} \abs{r(t)}.
\end{equation}
That is, 
\begin{equation} \label{eq:lyap_negdef}
\dot V < 0 \quad {\rm if} \; \norm{\bar{x}} > x_0
\end{equation}
where
\begin{align}
x_0 & =\frac{2\lambda_{P_{\max}}\norm{\bar b}r_{\max}}{\bar\lambda} \label{eq:x0} \\
\bar\lambda & =\lambda_{Q_{min}}-2\lambda_{P_{max}} k\xi_0. \label{eq:lamdabar}
\end{align}
In summary, the closed-loop system has bounded solutions for all $\theta(t) \in \Omega_l$ with $\norm{x(t)} \leq x_0$ if $(c_\eta, A_\eta, b_\eta)$ is such that
\begin{enumerate}[itemindent=0.1in, label=(B-{\roman*})]
%[\text{B-}(i)]
\item $q_c(s)$ has roots in $\mathbb{C}^-$ for all $|a_p|\leq \bar a$, and \label{cond:1}
\item $\xi_0<\varepsilon_0$ \label{cond:2}, where
\item $\displaystyle \xi_0 < \frac{\lambda_{Q_{min}}}{2k\lambda_{P_{max}}}$ \label{cond:3}
\end{enumerate}
We introduce the following definition:
\begin{defn} \label{def:2}
The triple $(c_\eta, A_\eta, b_\eta)$ is said to belong to \newline $S_\eta(\bar{a},\theta_{max},\xi_0)$ if conditions \ref{cond:1}, \ref{cond:2}, and \ref{cond:3} above are satisfied.
\end{defn}

\section{Main Result} \label{sec:4}
\begin{thm} \label{thm:1}
Let {${z(t)=[e(t) \; \theta(t)]^T}$}. The closed-loop adaptive system given by (\ref{eq:plant})-(\ref{eq:thetamajor}) has globally bounded solutions for all $\theta(t_a) \in \Omega_1$ if $(c_\eta, A_\eta, b_\eta) \in S_\eta(\bar{a},\theta_{max},\xi_0)$. 

\begin{defn} We define the region A and the boundary regions $\Bupp$ and $\B$ as follows
\begin{equation} \label{eq:boundaries}
\begin{aligned}
\Bupp &=\lbrace z \in \mathbb{R}^2 \; \lvert \; \theta_{max}^{\prime}  < \theta \leq \theta_{max} \rbrace \\
A &=\lbrace z \in \mathbb{R}^2 \; \lvert \; \theta \in \Omega_0  \rbrace \\
\B &=\lbrace z \in \mathbb{R}^2 \; \lvert \; -\theta_{max} \leq \theta < -\theta_{max}^{\prime}  \rbrace
\end{aligned}
\end{equation}
\end{defn}
\begin{defn}
We divide the boundary region $\B$ into two regions as follows:
\begin{equation} \label{eq:boundaries_lower}
\begin{aligned}
\BU &=\lbrace z \in \mathbb{R}^2 \; \lvert \; \theta \in \Omega_u \rbrace\\
\BL &=\lbrace z \in \mathbb{R}^2 \; \lvert \; \theta \in \Omega_l \rbrace
\end{aligned}
\end{equation}
with $\B=\BU \cup \BL$.
\begin{figure}[h!]
\psfrag{error}[t][][0.8]{Error, $e(t)$}
\psfrag{theta}[b][][0.8]{Parameter, $\theta(t)$}
\psfrag{z}[b][][0.8]{$z(t)=[e(t) \; \theta(t)]^T$}
\psfrag{A}[b][][0.7]{$A$}
\psfrag{Bbar}[][][0.6]{$\Bupp$}
\psfrag{BbarL}[][][0.6]{$\B$}
\psfrag{BU}[b,l][][0.6]{$\BU$}
\psfrag{BUarr}[l][][0.35]{$\rotatebox[origin=c]{90}{$\boldsymbol{\mapsto}$}$}
\psfrag{BL}[t,r][][0.6]{$\BL$}
\psfrag{BLarr}[r][][0.35]{$\rotatebox[origin=c]{-90}{$\boldsymbol{\mapsto}$}$}
\psfrag{zero}[b][][0.7]{$0$}
\psfrag{thm}[l][][0.7]{$\theta_{max}$}
\psfrag{thmp}[l][][0.7]{$\theta_{max}^{\prime}$}
\psfrag{minthm}[l][][0.7]{$-\theta_{max}$}
\psfrag{minthmp}[l][][0.7]{$-\theta_{max}^{\prime}$}
\psfrag{thstar}[l][][0.7]{$\theta^\star$}
\psfrag{vareps0}[r][][0.7]{$\varepsilon_0$}
\psfrag{vareps2}[r][][0.7]{$\xi_0$}
\psfrag{arr}[r][][0.75]{$\lbrace$}
\psfrag{arr2}[r][][0.4]{${\boldsymbol \lbrace}$}
\psfrag{ebar}[t][][0.7]{$\bar{e}$}
\psfrag{ebarmin2delta}[t][][0.7]{$\bar{e}-2\delta$}
\psfrag{ebarmindelta}[b][][0.7]{$\bar{e}-\delta$}
\psfrag{xmbar}[l,t][][0.7]{$\bar{x}_m$}
\psfrag{minxmbar}[r,t][][0.7]{$-\bar{x}_m$}
\psfrag{1}[t][][0.6]{\setlength{\tabcolsep}{0pt}\renewcommand{\arraystretch}{1.5}\begin{tabular}{c}{I}\end{tabular}}
\psfrag{2}[l][][0.6]{{II}}
\psfrag{3}[l][][0.6]{{III}}
\centering
\includegraphics[trim=3.1cm 1.25cm 0.5cm 1.3cm, clip=true, width=\figwid]{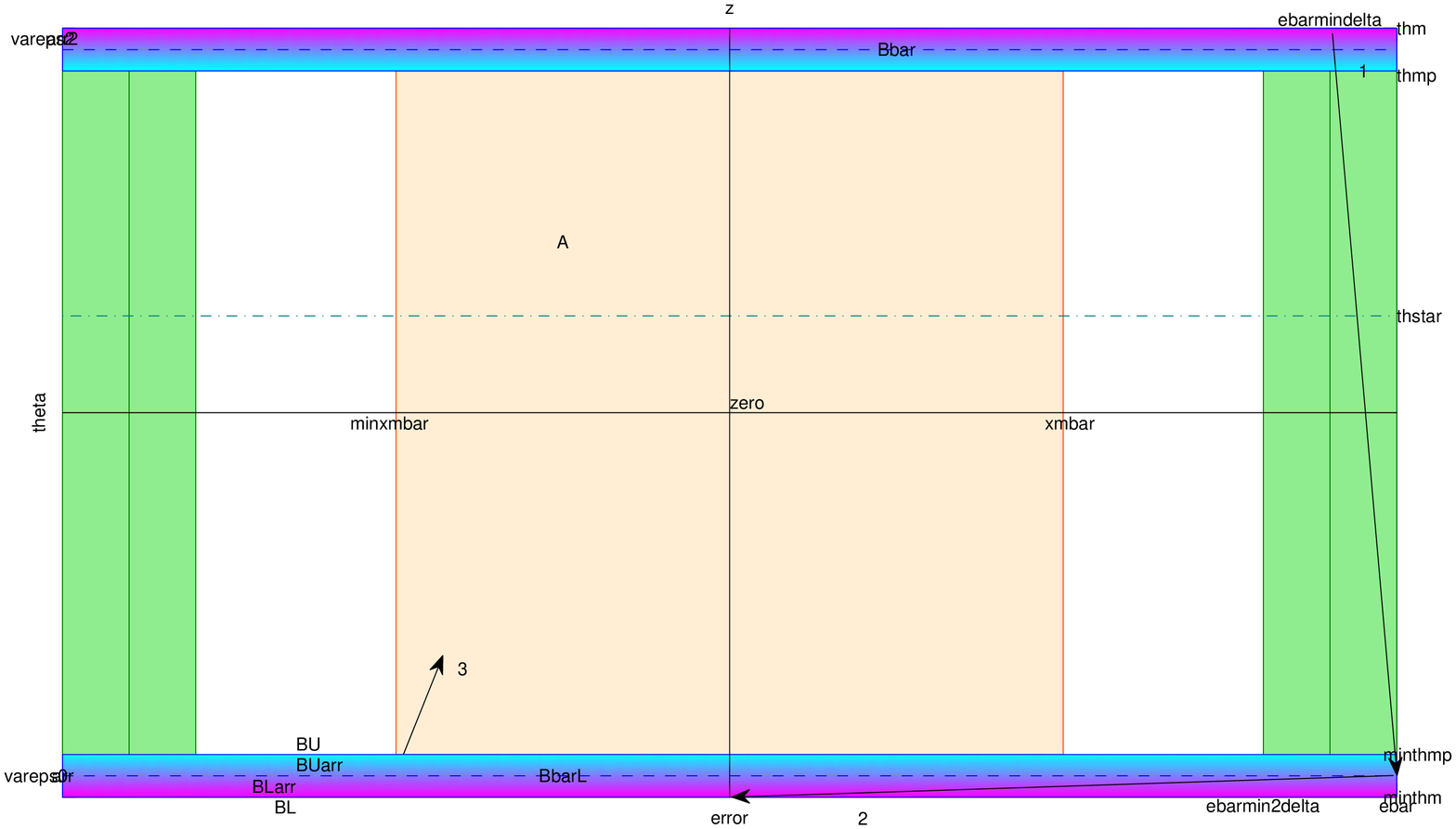} 
\caption{Definition of regions in (\ref{eq:boundaries}) and (\ref{eq:boundaries_lower}), and phases I-III.}
\label{fig:zspace}
\end{figure}
\end{defn}
\begin{proof}[Proof of Theorem \ref{thm:1}]
The closed-loop adaptive system has error dynamics in (\ref{eq:e}) equivalent to\footnote{For ease of exposition, we suppress the argument "t" in what follows.} 
\begin{equation} \label{eq:edot}
\dot{e}=a_m e+\widetilde{\theta}x_p+\eta
\end{equation}
where
\begin{equation} \label{eq:parerr}
\widetilde{\theta} =\theta - \theta^\star,\quad \theta^\star= a_m-a_p, \quad \eta = v - u.
\end{equation}
By combining the adaptive law in (\ref{eq:adaptlaw}) and (\ref{eq:Proj}), and boundary region definitions in (\ref{eq:boundaries}), we obtain
\begin{equation} \label{eq:adaptlawupdate}
\begin{aligned}
\dot{\theta}&=
& \begin{cases} 
 \displaystyle -\frac{\theta_{max}^2-{\theta^2}}{{\theta_{max}^2}- {{{\theta_{max}^{\prime 2}}}}} \gamma e x_p & \text{if} \; z \in (\B \cup \Bupp), \; -e x_p \theta >0\\\\
- \gamma e x_p & \text{otherwise}
\end{cases}
\end{aligned}
\end{equation}
Hence, the projection algorithm guarantees the boundedness of the parameter $\theta(t)$ independent of the system dynamics\cite{Proj}. It follows that Theorem \ref{thm:1} is proved if the global boundedness of $e(t)$ is demonstrated. This is achieved in four phases by studying the trajectory of $z(t)$ for all $t\geq t_a$. This methodology was originally proposed in \cite{Megumi} for adaptive control in the presence of time delay.

We begin with suitably chosen finite constants $\bar e$ and $\delta$ such that $\bar e-\delta >0$. The trajectory then has only two possibilities either (i) $|e(t)| < \bar e-\delta$ for all $t\geq t_a$, or (ii) there exists a time $t_a$ at which $|e(t_a)|=\bar e-\delta$. The global boundedness of $e(t)$ is immediate in case (i). We therefore assume there exists a $t_a$ where case (ii) holds.

\begin{enumerate}[label=({\Roman*})]
%[align=left,labelwidth=0.5em,labelsep= 0.5em, itemindent=1em,leftmargin=0cm,itemsep=0.5em,label={\Roman*})]
\item \emph{Entering the Boundary Region}: We start with $\abs{e(t_a)} = \bar{e}-\delta$. We then show that the trajectory enters the boundary region $\B$ at $t_b \in (t_a, t_a+\Delta T_{B})$, and $\BL$ at $t_c>t_b$ where $\Delta T_{B}$ and $t_c$ are finite.
\item \emph{In the Boundary Region, $\BL$}: When the trajectory enters $\B$, the parameter is in the boundary of the projection algorithm; $e$ is shown to be bounded in $\BL$ by making use of the stability property of the underlying linear time-varying system. For $t > t_c$, the trajectory has only two possibilities: either {(i)} $z$ stays in $\BL$ for all $t \geq t_c$, or {(ii)} $z$ reenters $\BU$ at some $t_d>t_c$ where $|e(t_d)| \leq \bar{x}_m$. 
\item \emph{In the Boundary Region, $\BU$}: For $t > t_d$, the trajectory has three possibilities: either {(i)} $z$ reenters $A$ at $t=t_e$, (ii) $z$ stays in $\BU$ for all $t \geq t_d$, or {(iii)} $z$ reenters $\BL$ at $t_f \in (t_d, t_d+\Delta T_{\BL})$ where $\Delta T_{\BL}$ is finite.
\item \emph{Return to Phase I or Phase II}: If case {(i)} from Phase III holds, then the trajectory has only two possibilities: either $|e(t)| < \bar e-\delta, \; \forall t > t_e$ which proves Theorem \ref{thm:1}, or there exists a $t_g>t_e$ such that $|e(t_g)| = \bar e-\delta$ in which case the conditions of Proposition \ref{prop:1} are satisfied with $t_a$ replaced by $t_g$, and Phases I through III are repeated for $t \geq t_g$. If case (ii) from Phase III holds, then the boundedness of $e$ is established for all $t \geq t_d$. If case (iii) holds, then Phases II and III are repeated for $t \geq t_f$. In all cases, $e$ remains bounded throughout.
\end{enumerate}

\subsection{Phase I: Entering the Boundary Region} \label{subsec:Phase1}
We start with $\abs{e(t_a)}=\bar{e}-\delta$. From (\ref{eq:parerr}), it is easy to see that
\begin{equation} \label{eq:boundeta}
\abs{\eta} \leq (k_\eta+1)\theta_{max}(|e|+\bar{x}_m)+(k_\eta+1) r_{max}
\end{equation}
where $k_\eta=\norm{G_\eta (s)}$ and
\begin{equation} \label{eq:xmebar}
\bar{x}_m=\max\limits_{t \geq t_a} \abs{x_m (t)}.
\end{equation}
We define $\bar e$ as
\begin{equation} \label{eq:ebars}
\bar{e} =\max\lbrace {e}_0, e_1\rbrace
\end{equation}
where
\begin{align}
{e}_0&= |x_p(t_a)|+\bar{x}_m+2\delta \label{eq:ebar0} \\
{e}_1&= \frac{1}{2} \left( \bar{c}b_0+\sqrt{\bar{c}^2 b_0^2 +4\bar{c}b_1}\right) \label{eq:ebar1}
\end{align}
with $b_0$ and $b_1$ defined in (\ref{eq:b0}) and (\ref{eq:b1}), $\delta \in (0,\bar{x}_m)$, $\alpha \in (0, \varepsilon_0]$, $c$ in (\ref{eq:varth}), and
\begin{align}
\bar{c}&=\dfrac{2\theta_{max}^{\prime}+\alpha+\dfrac{\varepsilon_0}{c}}{\delta \gamma}.
\end{align}

Phase I is completed by proving the following Proposition:
\begin{prop} \label{prop:1}
Let $z(t_a) \in A$ with $\abs{e(t_a)} = \bar{e}-\delta$ where $\bar{e}$ is given in (\ref{eq:ebars}) and $\delta \in (0,\bar{x}_m)$. Then
\begin{enumerate}[label=({\roman*})]
%[align=left,label={\roman*})]
\item $\abs{e(t)} \leq \bar{e}, \quad \forall t \in [t_a, t_a + \Delta T]$ \label{prop1:i}
\item $z(t_c) \in \BL$ for some $t_c \in (t_a,t_a+\Delta T)$ \label{prop1:ii}
\end{enumerate}
where
\begin{equation} \label{eq:Delta_T}
\Delta T= \frac{\delta}{b_0 \bar{e}+b_1}
\end{equation}
\begin{align}
b_0&=\abs{a_m}+(k_\eta+2)\theta_{max}+\abs{\theta^\star} \label{eq:b0} \\
b_1&= ((k_\eta+2)\theta_{max}+\abs{\theta^\star}) \bar{x}_m +(k_\eta+2)r_{max} \label{eq:b1}
\end{align}
\begin{proof}[Proof of Proposition \ref{prop:1}\ref{prop1:i}]
From (\ref{eq:edot}) and (\ref{eq:boundeta}), it follows that
\begin{equation} \label{eq:b0edot}
\abs{\dot{e}(t)} \leq {b_0 \bar{e}^\prime +b_1}, \quad \forall t \in [t_a, t_a+\Delta T]
\end{equation}
where
\begin{equation} \label{eq:ebarprime}
\bar{e}^\prime=\max\limits_{t \in [t_a, t_a+\Delta T]} \abs{e(t)}.
\end{equation}
We will show below that $\bar e^\prime=\bar e$ which proves Proposition \ref{prop:1}(i). We have that for all $\Delta t \in [0,\Delta T]$,
\begin{align}
\abs{e(t_a+\Delta t)} &\leq \abs{e(t_a)} +\max\limits_{t \in [t_a, t_a+\Delta T]} \abs{\dot{e}(t)}\Delta T \\
&\leq ({\bar{e}-\delta})+({b_0 \bar{e}^\prime +b_1})\frac{\delta}{b_0 \bar{e}+b_1} \label{eq:ebarprime2}
\end{align}
from (\ref{eq:b0edot}), (\ref{eq:ebarprime}), the definition of $\Delta T$, and the choice of $\abs{e(t_a)}$. From (\ref{eq:ebarprime}), with some algebraic manipulations, (\ref{eq:ebarprime2}) can be rewritten as
\begin{equation}
\bar e^\prime \leq \bar e +
\frac{ \delta b_0}{b_0\bar e + b_1} \left( \bar e^\prime - \bar e\right)
\end{equation}
which can be simplified as 
\begin{equation} \label{ineq1}
\left(\bar e^\prime - \bar e\right)
\left(1-b_0\Delta T \right) \leq 0.
\end{equation}
Since $\delta < \bar x_m$, from the definition of $b_0$ and $b_1$, it can be shown that 
\begin{equation}
(1-b_0\Delta T) >0.
\end{equation}
Therefore from (\ref{ineq1}), it follows that 
\begin{equation} \label{ineq2}
\bar e^\prime - \bar e \leq 0.
\end{equation}
From the definition of $\bar e^\prime$ in (\ref{eq:ebarprime}), it follows that only the equality in (\ref{ineq2}) can hold. Hence,
\begin{equation} \label{eq:prop1ebar}
\abs{e(t_a+\Delta t)} \leq \bar{e}, \quad \forall \Delta t \in [0,\Delta T]
\end{equation}
which implies that
\begin{equation} \label{eq:prop1ebar2}
\abs{e(t)} \leq \bar{e}, \quad \forall t \in [t_a,t_a+\Delta T]
\end{equation}
and the proof of Proposition \ref{prop:1}\ref{prop1:i} is complete.
\end{proof}
\begin{proof}[Proof of Proposition \ref{prop:1}(\ref{prop1:ii}]
We note from (\ref{eq:b0edot}) that
\begin{equation} \label{eq:prop1ii}
\abs{e(t)} \geq \abs{e(t_a)}-({b_0 \bar{e} +b_1})\Delta T, \quad \forall t \in [t_a, t_a+\Delta T].
\end{equation}
Since $|e(t_a)| = \bar e-\delta$, (\ref{eq:prop1ii}) can be simplified as
\begin{equation}
\abs{e(t)} \geq \bar{e}-2\delta, \quad \forall t \in [t_a, t_a+\Delta T].
\end{equation}
Since $\bar{e} \geq |x_p(t_a)| + \bar x_m+ 2\delta$ and $\delta < \bar x_m$, it follows that
\begin{equation} \label{eq:xm_ebar}
\bar{e}-2\delta > \bar x_m.
\end{equation}
This in turn implies that $\dot{\theta}(t)$ is negative for all $t\in [t_a,t_a+\Delta T]$ with
\begin{align} \label{eq:prop1thetaderivative}
{\theta(t_a)-\theta(t_a + \Delta t)} & \geq \gamma (\bar{e}-2\delta) (\bar{e}-2\delta- \bar{x}_m){\Delta t}
\end{align}
for all $\Delta t \in [0, \Delta T]$. Defining,
\begin{equation} \label{eq:prop1deltaTinmax}
\Delta T_B =\frac{2\theta_{max} - \varepsilon_0 + \alpha}{\gamma(\bar{e}-2\delta) ( \bar{e}-2\delta- \bar{x}_m)}
\end{equation}
it follows that $z(t_b)$ enters $\B$ at $t_b \in (t_a, t_a+\Delta T_B)$ if $\Delta T_B \leq \Delta T$.

We now show that $z(t_c)$ enters $\BL$ at $t_c< t_a + \Delta T_B^\prime$ for some $\Delta T_B^\prime > \Delta T_B$. It can first be proven that
\begin{equation} \label{eq:prop1proj}
\abs{\Proj(\theta,y)} > c\abs{y}, \quad \forall z \in \BU.
\end{equation}
Then, from (\ref{eq:adaptlawupdate}),
\begin{equation} \label{eq:prop1thetadot2}
-\dot{\theta}(t) > \gamma c (\bar{e}-2\delta) ( \bar{e}-2\delta- \bar{x}_m), \quad \forall t \in T_{\BU}
\end{equation}
where $T_{\BU}$ is defined as
\begin{equation}
T_{\BU}: \lbrace t \; \lvert \; z(t) \in \BU \; \text{and} \; t \in [t_a,t_a+\Delta T] \rbrace.
\end{equation}
Since the distance the trajectory can travel in $\BL$ is bounded by $\xi_0$, the maximum time $z(t)$ spends in $\BU$ can be derived from (\ref{eq:prop1thetadot2}), and we obtain
\begin{equation} \label{eq:deltaTBL}
\Delta T_{\BL}=\frac{(1-c)\varepsilon_0}{\displaystyle {\gamma}{c} (\bar{e}-2\delta) ( \bar{e}-2\delta- \bar{x}_m)}.
\end{equation}
This implies that $z(t_c)$ enters $\BL$ at $t_c \in (t_a,t_a+\Delta T_B^\prime)$ where
\begin{equation}
\Delta T_B^\prime=\Delta T_B + \Delta T_{\BL}
\end{equation}
if $\Delta T_B^\prime \leq \Delta T$, since then (\ref{eq:prop1thetadot2}) is satisfied for all $t \in (t_b,t_c]$. From the choice of $\bar e$ in (\ref{eq:ebars}), we have that
\begin{align} \label{eq:ebar1max}
\bar{e} \geq \frac{1}{2} \left( \bar{c}b_0+\sqrt{\bar{c}^2 b_0^2 +4\bar{c}b_1}\right).
\end{align}
Using algebraic manipulations, it can be shown that (\ref{eq:ebar1max}) implies that $\Delta T_B^\prime \leq \Delta T$. This proves Proposition \ref{prop:1}\ref{prop1:ii}.%\qed
\end{proof}
\end{prop}

\subsection{Phase II: In the Boundary Region, $\text{\b{$B$}}_L$} \label{subsec:Phase2}
When the trajectory enters $\BL$, the parameter is in the boundary of the projection algorithm with thickness $\xi_0$; $e(t)$ is shown to be bounded by making use of the underlying linear time-varying system in (\ref{eq:LTVsys}) and (\ref{eq:LTV_A}).

Let $z(t) \in \BL$ for $t\in[t_c, t_d)$. That is, $\theta(t)=-\theta_{max}+\xi(t)$ for $t\in [t_c,t_d)$ with $\xi(t)$ satisfying (\ref{eq:varth_t}) and (\ref{eq:varth}). Since $(c_\eta, A_\eta, b_\eta) \in S_\eta(\bar{a},\theta_{max},\xi_0)$, from (\ref{eq:lyap_negdef}), it follows that
\begin{equation} \label{eq:normxbar}
\norm{\bar{x}(t)} \leq x_0, \quad \forall t \in T_{\BL}
\end{equation}
where $T_{\BL}$ is defined as 
\begin{equation}
T_{\BL}: \lbrace t \; \lvert \; z(t) \in \BL \rbrace.
\end{equation}
Since $\abs{e(t)}\leq \abs{x_p(t)}+\bar{x}_m$ for all $t \in T_{\BL}$ and $\bar{x}=[x_p \; x_\eta^T]^T$, this implies
\begin{equation} \label{eq:ebar2_phase2}
\abs{e(t)}\leq \bar{e}_2, \quad \forall t\in (t_c,t_d)
\end{equation}
where
\begin{equation} \label{eq:ebar2}
\bar{e}_2= x_0 + \bar{x}_m
\end{equation}
which proves boundedness of $e$ in $\BL$. 

We have so far shown that if the trajectory begins in $A$ at $t=t_a$, it will enter the region $\BL$ at $t=t_c$, where $t_c < t_a+\Delta T$, and $\Delta T$ is finite. For $t > t_c$, there are only two possibilities either (i) $z$ stays in $\BL$ for all $t > t_c$, or (ii) $z$ reenters $\BU$ at $t=t_d$ for some $t_d > t_c$. If (i) holds, it implies that (\ref{eq:ebar2_phase2}) holds with $t_d = \infty$, proving Theorem \ref{thm:1}. The following Proposition addresses case (ii):

\begin{prop} \label{prop:2}
Let $z(t) \in \BL$ for $t \in [t_c, t_d)$ and $z(t_d) \in \BU$ for some $t_d > t_c$. Then
\begin{equation}
\abs{e(t_d)} \leq \bar{x}_m
\end{equation}
\begin{proof}
Since $z(t) \in \BL$ for $t \in [t_c, t_d)$ and $z(t_d) \in \BU$ for some $t_d > t_c$, from (\ref{eq:boundaries_lower}), it follows that for any $\Delta t_d \in (0, t_d-t_c]$,
\begin{equation}
\theta(t_d-\Delta t_d) \leq -\theta_{max}+\xi_0,  \quad \theta(t_d) \geq -\theta_{max}+\xi_0. 
\end{equation}
This implies that $\dot{\theta}(t_d) \geq 0$ which in turn implies
\begin{equation} \label{eq:et_d}
\abs{e(t_d)} \leq \bar{x}_m.
\end{equation}
which proves Proposition \ref{prop:2}.%\qed
\end{proof}
We note from (\ref{eq:ebar2}) that 
\begin{equation} \label{eq:ebar2_phase2b}
|e(t)| \leq \bar{e}_2, \quad \forall t \in (t_c, t_d]
\end{equation}
which proves boundedness of $e$ in Phase II.
\end{prop}

\subsection{Phase III: In the Boundary Region, $\text{\b{$B$}}_U$} \label{subsec:Phase3}
The boundedness of $e$ has been established thus far for all $t \in [t_a, t_d]$. For $t>t_d$, there  are three cases to consider: either (i) $z$ reenters $A$ at $t=t_e$ for some $t_e > t_d$, (ii) $z$ remains in $\BU$ for all $t \geq t_d$, or (iii) $z$ reenters $\BL$ at $t_f \in (t_d,t_d+\Delta T_{\BL})$ with $\Delta T_{\BL}$ given by (\ref{eq:deltaTBL}). 

We address case (i) in the following Proposition.

\begin{prop} \label{prop:3}
Let $z(t) \in \BU$ for $t \in [t_d, t_e)$ and $z(t_e) \in A$ for some $t_e > t_d$. Then
\begin{equation}
\abs{e(t)} < \bar{x}_m, \quad \forall t \in (t_d, t_e]
\end{equation}
\begin{proof}
Since $z(t) \in \BU$ for $t \in [t_d, t_e)$ and $z(t_e) \in A$ for some $t_e > t_d$, from (\ref{eq:boundaries}), it follows that for any $\Delta t_e \in (0, t_e-t_d]$,
\begin{equation}
\theta(t_e-\Delta t_e) < -\theta_{max}^\prime,  \quad \theta(t_e) \geq -\theta_{max}^\prime.
\end{equation}
This implies that $\dot{\theta}(t)$ is positive, and we obtain
\begin{equation} \label{eq:et_dprime}
\abs{e(t)} < \bar{x}_m, \quad \forall t \in (t_d,t_e]
\end{equation}
which proves Proposition \ref{prop:3}. %\qed
\end{proof}
\end{prop}

We now address case (ii) and (iii). 

We consider suitably chosen finite constants $\bar e_3$ and $\delta$ such that $\bar{e}_3-\delta >0$, and
\begin{equation} \label{eq:ebar3}
\bar{e}_3 =\max \lbrace e_2, e_3 \rbrace
\end{equation}
where
\begin{align}
e_2 &= 2\bar{x}_m+2\delta \label{eq:e2} \\
e_3 &= \frac{1}{2} \left( \bar{c}_2 b_0+\sqrt{\bar{c}_2 ^2 b_0^2 +4\bar{c}_2 b_1}\right) \label{eq:e3}
\end{align}
and
\begin{align}
\bar{c}_2&=\dfrac{(1-c) \varepsilon_0}{\delta \gamma c}.
\end{align}

From (\ref{eq:et_d}) and the definition of $\bar{e}_3$, it follows that 
\begin{equation} \label{eq:phase3_caseii}
|e(t_d)| < \bar{e}_3-\delta.
\end{equation}
If $e(t)$ grows without bound, it implies that there exists $t_d^{\prime} >t_{c}$ such that
\begin{equation} \label{eq:phase3_case3}
|e(t_d^{\prime})|=\bar{e}_3-\delta.
\end{equation}
Hence,
\begin{equation} \label{eq:phase3_case2}
|e(t)|<\bar{e}_3-\delta, \quad \forall t \in [t_d, t_d^{\prime}).
\end{equation}

We show below that if such a $t_d^{\prime}$ exists, then $z(t)$ must enter $\BL$ at $t=t_f$, for some finite $t_f>t_d^{\prime}$.

\begin{prop} \label{prop:4}
Let $z(t) \in \BU$ for all $t \in [t_d,t_f)$, and $\exists t_d^{\prime} \in (t_d,t_f)$ such that $\abs{e(t_d^{\prime})} = \bar{e}_3-\delta$ where $\bar{e}_3$ is given in (\ref{eq:ebar3}) and $\delta \in (0,\bar{x}_m)$. Then
\begin{enumerate}[label=({\roman*})]
%[align=left,label={\roman*})]
\item $\abs{e(t)} \leq \bar{e}_3, \quad \forall t \in [t_d^{\prime}, t_d^{\prime} + \Delta T^\prime]$ \label{prop4:i}
\item $z(t_f) \in \BL$ for some $t_f \in (t_d^{\prime},t_d^{\prime}+\Delta T^\prime)$ \label{prop4:ii}
\end{enumerate}
where
\begin{equation} \label{eq:Delta_Tp}
\Delta T^\prime= \frac{\delta}{b_0 \bar{e}_3+b_1}
\end{equation}
\begin{proof}We note that Proposition (\ref{prop:4}) is identical to Proposition \ref{prop:1} with $t_a$ replaced by $t_d^{\prime}$, $\bar{e}$ replaced with $\bar{e}_3$, and $z(t_d^{\prime}) \in \BU$ which implies $\Delta T_B=0$. Using an identical procedure, we can prove both Proposition \ref{prop:4}\ref{prop4:i} and Proposition \ref{prop:4}\ref{prop4:ii}.
\end{proof}
\end{prop}

We note that if case (ii) holds, it implies that (\ref{eq:phase3_case2}) holds for $t_d^{\prime}=\infty$, which implies that $e(t)$ is globally bounded.

In summary, in Phase III, we conclude that if $z$ enters $\BU$ at $t=t_d$,
\begin{enumerate}[label=({\roman*})]
\item $z$ enters A at $t=t_e$ with $|e(t)| < \bar{x}_m$ for all $t \in [t_d, t_e]$,
\item $z$ remains in $\BU$ for $t \geq t_d$ with $|e(t)| < \bar{e}_3 -\delta$ for all $t \geq t_d$, or
\item $z$ enters $\BL$ at $t=t_f$ for $t_f>t_d$ with $|e(t)| \leq \bar{e}_3$ for all $t \in [t_d, t_f]$.
\end{enumerate}
Therefore, either Phases I and II, or Phases I, II, and III, can be repeated endlessly but with $|e(t)|$ remaining bounded throughout. This is stated in the next section.

\subsection{Phase IV: Return to Phase I or Phase II} \label{subsec:Phase4}
If Proposition \ref{prop:3} is satisfied, then the trajectory has exited the boundary region and entered Region $A$. Therefore, $|e(t)| < \bar{e}-\delta$ for all $t \geq t_e$, in which case the boundedness of $e$ is established, proving Theorem \ref{thm:1}, or there exists a $t_g>t_e$ such that $|e(t_g)|=\bar{e}-\delta$. The latter implies that the conditions of Proposition \ref{prop:1} are satisfied with $t_a$ replaced by $t_g$. Therefore, Phases I through III are repeated for $t\geq t_g$. 

If Proposition \ref{prop:4} is satisfied instead, then  $z$ has reentered $\BL$, in which case Phases II and III are repeated for $t > t_f$. 

By combining (\ref{eq:ebars}) from Phase I, (\ref{eq:ebar2_phase2}), (\ref{eq:ebar2}), and (\ref{eq:ebar2_phase2b}) from Phase II, and (\ref{eq:et_dprime}) and (\ref{eq:ebar3}) from Phase III, we obtain
\begin{equation}
|e(t)| \leq \max \lbrace \bar{e},\bar{e}_2, \bar{e}_3 \rbrace, \quad \forall t \geq t_a
\end{equation}
proving Theorem \ref{thm:1}.%\qed
%\end{comment}
\end{proof}
\end{thm}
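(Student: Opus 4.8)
The plan is to leverage Lemma~\ref{lem:1}: the projection algorithm already guarantees $|\theta(t)|\le\theta_{max}$ for all $t\ge t_a$, so global boundedness of the full state reduces to boundedness of the tracking error $e(t)$. First I would rewrite the closed loop in the error coordinate as $\dot e = a_m e + \widetilde\theta x_p + \eta$, with $\widetilde\theta=\theta-\theta^\star$ and the ``disturbance'' $\eta=v-u$ collecting the unmodeled dynamics, and record the crude estimate $|\eta|\le(k_\eta+1)\theta_{max}(|e|+\bar x_m)+(k_\eta+1)r_{max}$ obtained from the bound on $G_\eta(s)$ and the boundedness of $x_m$. The key structural observation is the sign of the adaptive law: whenever $|e|>\bar x_m$ we have $e\,x_p=e(e+x_m)>0$, so $\dot\theta<0$ and the parameter is driven monotonically toward the lower boundary $\theta=-\theta_{max}$ --- exactly the frozen value at which, by the hypothesis $(c_\eta,A_\eta,b_\eta)\in S_\eta(\bar a,\theta_{max},\xi_0)$, the matrix $\bar A$ is Hurwitz. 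Thus large errors are self-correcting: they push $\theta$ into the stabilizing boundary.

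I would then track the trajectory $z(t)=[e\;\;\theta]^T$ through the regions $A$, $\Bupp$, and $\B=\BU\cup\BL$ of (\ref{eq:boundaries})--(\ref{eq:boundaries_lower}) in recurring phases. For \emph{Phase~I}, starting from $|e(t_a)|=\bar e-\delta$ in $A$ for a suitably large threshold $\bar e$, I bound $|\dot e|\le b_0\bar e'+b_1$ on a window of length $\Delta T=\delta/(b_0\bar e+b_1)$ and close a self-consistency (bootstrap) argument: the peak $\bar e'$ over the window satisfies $(\bar e'-\bar e)(1-b_0\Delta T)\le 0$ with $1-b_0\Delta T>0$, forcing $\bar e'=\bar e$, so $|e|\le\bar e$ on $[t_a,t_a+\Delta T]$. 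Simultaneously $|e|>\bar x_m$ forces $\dot\theta<0$ at a rate bounded below by $\gamma(\bar e-2\delta)(\bar e-2\delta-\bar x_m)$, so $\theta$ descends through $\Bupp$ and $\BU$ and reaches $\BL$ in finite time. In \emph{Phase~II}, once $z\in\BL$ we have $\theta=-\theta_{max}+\xi(t)$ with $\xi\in[0,\xi_0]$, so the closed loop is the linear time-varying system (\ref{eq:LTVsys})--(\ref{eq:LTV_A}); the Lyapunov estimate (\ref{eq:lyapderivLTV}), valid under conditions \ref{cond:1}--\ref{cond:3} defining $S_\eta$, gives $\dot V<0$ whenever $\|\bar x\|>x_0$, hence $\|\bar x\|\le x_0$ and $|e|\le\bar e_2=x_0+\bar x_m$ throughout $\BL$.

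\emph{Phases~III and IV} dispose of the exits from the boundary. The decisive fact is that any upward crossing (from $\BL$ into $\BU$, or from $\BU$ into $A$) requires $\dot\theta\ge0$, i.e. $e\,x_p\le0$, which immediately yields $|e|\le\bar x_m$ at the crossing instant; the remaining case --- $z$ lingering in $\BU$ while $e$ grows --- is handled by a threshold $\bar e_3$ and a bootstrap identical to Phase~I (now with the entry into the boundary already accomplished). I expect the main obstacle to be the matching of time scales in Phase~I: the window $\Delta T$ on which $|e|$ is a priori bounded shrinks like $1/\bar e$, while the traversal time of the boundary shrinks like $1/\bar e^{\,2}$ because $\dot\theta$ scales with $e^2$; guaranteeing that $\theta$ reaches $\BL$ before the window closes forces $\bar e$ to exceed the positive root of a quadratic in $\bar e$ assembled from $b_0$, $b_1$, and the boundary thickness. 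Once that threshold and its Phase~III analogue are secured, the phases may recur indefinitely with $|e(t)|\le\max\{\bar e,\bar e_2,\bar e_3\}$ for all $t\ge t_a$; boundedness of $e$, together with Lemma~\ref{lem:1} and stability of the reference model, then bounds $x_p$, $u$, and $x_\eta$, proving global boundedness.
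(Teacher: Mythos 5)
Your proposal is correct and follows essentially the same route as the paper's own proof: the reduction via Lemma~\ref{lem:1} to boundedness of $e(t)$, the error dynamics $\dot e = a_m e + \widetilde\theta x_p + \eta$, the four-phase trajectory analysis with the bootstrap inequality $(\bar e'-\bar e)(1-b_0\Delta T)\le 0$ in Phase~I, the LTV Lyapunov bound $|e|\le x_0+\bar x_m$ in $\BL$, the sign-of-$\dot\theta$ argument at upward crossings, and the quadratic threshold on $\bar e$ (and $\bar e_3$) guaranteeing the boundary is reached within the window. You even identify the same key technical obstacle (the $1/\bar e$ versus $1/\bar e^{\,2}$ time-scale matching) that motivates the paper's choices of $e_1$ and $e_3$, so there is nothing substantive to add.
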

\section{Numerical Example} \label{sec:5}
In this section we demonstrate using the counterexample in \cite{Rohrs} as to how the main result in this paper can be used to obtain robust adaptive control in the presence of unmodeled dynamics. We consider the nominal first order stable plant\footnote{$s$ in what follows is a differential operator $d/dt$ and not the Laplace variable.}
\begin{equation}
{x_p}(t)= \frac{2}{\left( s+1 \right)}[u(t)]
\end{equation}
in the presence of highly damped second order unmodeled dynamics, described by (\ref{eq:Geta2ndorder}) with 
\begin{equation} \label{eq:zeta_omega}
\zeta = 0.9912, \quad \omega_n = 15.1327
\end{equation}
and a reference model
\begin{equation}\label{eq:rohrs2}
x_m(t)=\frac{3}{\left( s+3 \right)}[r(t)].
\end{equation}
The adaptive controller is chosen as in (\ref{eq:adaptlaw})-(\ref{eq:thetamajor}) with suitably chosen $\theta_{max}$ and $\varepsilon_0= 0.1\theta_{max}$. The control problem differs slightly from that shown in Fig. \ref{fig:ControlProb} and requires gain compensation on the reference input.

That is, the plant and reference model differ from (\ref{eq:plant}) and (\ref{eq:refmod}) such that
\begin{align}
\dot{x}_p(t) & = a_p x_p(t)+k_p v(t) \label{eq:plant_num} \\
\dot{x}_m(t)& = a_m x_m(t)+k_m r(t) \label{eq:refmod_num}
\end{align}
where 
\begin{equation} \label{eq:num_param}
a_p=-1, \quad k_p=2, \quad a_m=-3, \quad k_m=3.
\end{equation}
The control input is then
\begin{equation} \label{eq:controlin_num}
u(t)=\theta(t)x_p(t)+k_r r(t)
\end{equation}
where $k_r= \sfrac{k_m}{k_p}=1.5$ so as to match the closed-loop adaptive system when no unmodeled dynamics are present ($G_{\eta}(s) \equiv 1$). 

We now show that (\ref{eq:Geta2ndorder}) with (\ref{eq:zeta_omega}) corresponds to $S_\eta(a_p,\theta_{max},\varepsilon_0)$ for suitably chosen $\theta_{max}$ and $\varepsilon_0$. When $\theta(t)=-\theta_{max}$ in (\ref{eq:controlin_num}), the closed-loop adaptive system given by (\ref{eq:plant_num}), (\ref{eq:refmod_num}), (\ref{eq:num_param}), and (\ref{eq:controlin_num}) has a transfer function from $r$ to $x_p$ of the form
\begin{equation} \label{eq:Gcl_2ndorder_numeric}
G_c (s)= \frac{458}{s^3+31s^2+259s+229+458\theta_{max}}.
\end{equation}
In addition to condition \ref{cond1}, the following conditions are neccessary and sufficient for the poles of $G_c(s)$ in (\ref{eq:Gcl_2ndorder_numeric}) to lie in $\mathbb{C}^-$, which are slightly modified versions of \ref{cond2} and \ref{cond3} due to the presence of $k_p$ and $k_m$:
\begin{enumerate}[align=left, label=(A-{\roman*})b]
%[\text{A-}(i)b]
\setcounter{enumi}{1}
\item $\; \theta_{max} > \displaystyle \frac{a_p}{k_p}$ \label{cond2_num_b}
\item $\theta_{max} < \displaystyle \frac{1}{k_p} \bigg(-4a_p\zeta^2 + \frac{2\zeta a_p^2}{\omega_n} + 2\zeta\omega_n \bigg)$ \label{cond3_num_b}
\end{enumerate}
Therefore, if $\theta_{max}$ is such that 
\begin{equation} \label{eq:ineq1_num}
\displaystyle \frac{a_p}{k_p} < \theta_{max} < \frac{1}{k_p} (a_p+ \bar\theta^\star)
\end{equation}
with $\bar{\theta}^*=35.06$ then conditions \ref{cond2_num_b} and \ref{cond3_num_b} hold. Since $a_p=-1$ and $k_p=2$, if we choose $\theta_{max} = 16.7$, (\ref{eq:ineq1_num}) is satisfied. With $\zeta$ and $\omega_n$ in (\ref{eq:zeta_omega}), \ref{cond1} is satisfied as well.

We now demonstrate the choice of $\varepsilon_0$. Since the closed-loop system in (\ref{eq:Gcl_2ndorder_numeric}) is stable for $\xi_0$ satisfying \ref{cond:3}, a Lyapunov function is chosen as in (\ref{eq:lyap}). It follows from (\ref{eq:lyap2PQ}) and (\ref{eq:lamdamax}) that $Q$ and $P$ are such that
$\lambda_{Q_{min}} =1$ and $\lambda_{P_{max}} = 47773.6$.
Since $\norm{b_\eta} = 229$ from (\ref{eq:Geta2ndorder}) and (\ref{eq:zeta_omega}), we choose $\xi_0$ using \ref{cond:3} such that
\begin{equation} \label{eq:var_num}
\xi_0 = 4.57\cdot 10^{-8}.
\end{equation} 
Condition \ref{cond:2} implies that any $\varepsilon_0$ such that $\xi_0 < \varepsilon_0 < \theta_{max}$ suffices, with the actual value determined between the trade-off between adaptation and numerical accuracy. In the numerical simulations we report below, we chose $\varepsilon_0=0.1 \theta_{max}$.

In summary, $\theta_{max}=16.7$, $\xi_0$ as in (\ref{eq:var_num}), and $\varepsilon_0=1.7$, ensures that the triple $(c_\eta,A_\eta,b_\eta)$ belongs to $S_\eta(a_p,\theta_{max}, \xi_0)$. With these choices, the adaptive controller in (\ref{eq:adaptlawupdate}) and (\ref{eq:controlin_num}) guarantees globally bounded solutions for any initial conditions $x_p(0)$ and $\theta(0)$ with $\norm{\theta(0)} \leq \theta_{max}$ for the Rohrs unmodeled dynamics in (\ref{eq:Geta2ndorder}) and (\ref{eq:zeta_omega}).

\subsection{Simulation Studies}
In this section, we carry out numerical studies of the adaptive system defined by the plant in (\ref{eq:plant_num}) in the presence of unmodeled dynamics in (\ref{eq:Geta2ndorder}) and (\ref{eq:zeta_omega}) with the reference model in (\ref{eq:refmod_num}), the controller in (\ref{eq:controlin_num}), and the adaptive law in (\ref{eq:adaptlawupdate}) with $\theta_{max}=16.7$ and $\varepsilon_0=1.7$. The resulting plant output, $x_p$, reference model output, $x_m$, error, $e$, and $\theta$ are illustrated in Fig. \ref{fig:case_2} for the reference input
\begin{equation} \label{eq:rohrs_sine2}
r(t)= 0.3+1.85\sin(16.1\text{t})
\end{equation}
and initial conditions $x_p(0)=0$ and $\theta(0)=-0.65$. It was observed that all of these quantities became unstable when the projection bound in (\ref{eq:thetamajor}) was removed. It is interesting to note that in this case, only Phases I and II discussed in Section \ref{sec:4} occurred, with Phase I lasting from $t=0$ to $t=1377.5s$ and Phase II for all $t\geq 1377.5s$.
This clearly validates the main result of this paper reported in Theorem 1.
\psfrag{Outputs}[b][][\plab]{Outputs}
\psfrag{Error}[b][][\plab]{$e$}
\psfrag{theta}[b][][\plab]{$\theta$}
\psfrag{time}[t][][\plab]{time $(s)$}
\psfrag{th*}[l][r][\plab]{$\theta^\star$}
\psfrag{Xm}[][r][\pvar]{$x_m(t)$}
\psfrag{Xp}[b][br][\pvar]{$\, x_p(t)$}
\psfrag{H}[c][][\marker]{$\odot$}
\psfrag{X}[][][\marker]{$\bullet$}
\psfrag{-}[r][][\marker][-90]{$\boldsymbol{\scriptscriptstyle{---------}}$}
\psfrag{I}[b][l][\pvar]{\setlength{\tabcolsep}{0pt}\renewcommand{\arraystretch}{10.5}\begin{tabular}{c}$\text{I}_f$\end{tabular}}
\psfrag{Int}[t][][\pvar]{}
\psfrag{Intf}[l][r][\pvar]{$\theta(t_f)$}
\begin{figure}[t]
%\centering
	\includegraphics[trim=0 0.6cm 0 0.2cm, width=\figwid]{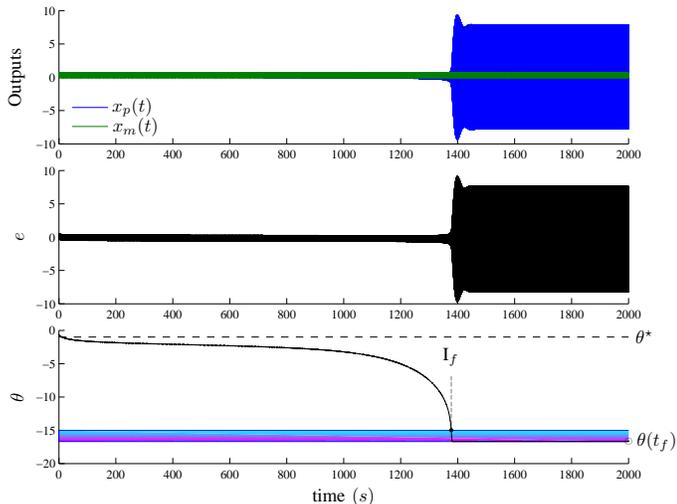} 
	\caption{Simulation of stable plant in the presence of highly damped unmodeled dynamics with adaptive law in (\ref{eq:adaptlawupdate}).}
	\label{fig:case_2}
\end{figure}
In what follows, we carry out a more detailed study of this adaptive system, by only changing the reference input. As the numerical simulations will show, the behavior of the adaptive system, in terms of which of the four phases reported in Section \ref{sec:4} occur, is directly dependent on the nature of the reference input. Four different choices of the reference input are made, and the corresponding behavior are described.
\begin{enumerate}[label=({\roman*})]
%[align=left,labelwidth=0.5em,labelsep= 0.5em, itemindent=1em,leftmargin=0cm,itemsep=0.0em,label=({\roman*})]
\item \underline{$r(t)=0.3+2.0\sin (8t)$:} The error, $e$, and parameter, $\theta$, corresponding to this reference input are shown in Fig. \ref{fig:case_1}. We observe immediately that $|e(t)|< 1$ for all $t\geq 0$. As a result, the trajectory never enters $\B$, eliminating the need for Phases II, III, or IV. Hence, no projection is required in this case. \label{refin:case_1}
\begin{figure}[h]
%\centering
	\includegraphics[trim=0 0.6cm 0 0.2cm, clip=true,width=\figwid]{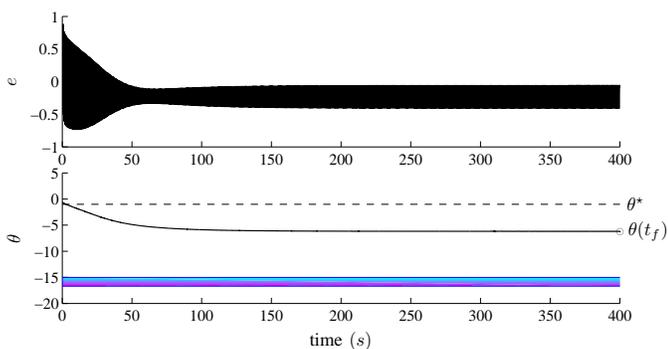} 
	\caption{Simulation results with reference input in \ref{refin:case_1}.}
	\label{fig:case_1}
\end{figure}
\item \underline{$r(t)$: A pulse for the first one second}. That is,
\begin{equation}
r(t) = \begin{cases} 12 &0 \leq t \leq 1s \\ 
0 & t > 1s\end{cases} 
\end{equation}
The corresponding trajectories are shown in Fig. \ref{fig:case_3}, which illustrate that Phase I occurs for $0\leq t\leq 0.9s$, and Phase II for $0.9s \leq t < 1.0s$. The trajectory exits the boundary region at $t_e= 1.0s$, demonstrating Phase III. Phase I is repeated, and the trajectory reenters $\B$ at $t_b=1.3s$, demonstrating Phase IV. The trajectory then settles in $\B$ for all $t\geq 1.3s$. \label{refin:case_3}
\psfrag{Error}[][][\plab]{$e$}
\psfrag{time}[][][0.6]{time $(s)$}
\psfrag{thi}[t][][\pvar]{}
\psfrag{thf}[l][r][\pvar]{$\theta(t_f)$}
\psfrag{-}[r][][\marker][90]{$\boldsymbol{\scriptscriptstyle{----}}$}
\psfrag{+}[l][][\marker][90]{$\boldsymbol{\scriptscriptstyle{-----------}}$}
\psfrag{/}[r][][\marker][90]{$\boldsymbol{\scriptscriptstyle{----}}$}
\psfrag{I}[tl][][\pvar]{\setlength{\tabcolsep}{-1.75pt}\renewcommand{\arraystretch}{3.1}\begin{tabular}{c}$\text{I}_f$\end{tabular}}
\psfrag{II}[bl][][\pvar]{\setlength{\tabcolsep}{-3.5pt}\renewcommand{\arraystretch}{12.0}\begin{tabular}{c}$\text{II}_f$\end{tabular}}
\psfrag{III}[tl][][\pvar]{\setlength{\tabcolsep}{1.75pt}\renewcommand{\arraystretch}{2.4}\begin{tabular}{c}$\text{III}_f$\end{tabular}}
\psfrag{Int}[t][b][\intc]{$z(0)$}
\psfrag{Intf}[tr][b][\intc]{$z(t_f)$}
\psfrag{Y}[][][\marker]{$\bullet$}
\begin{figure}[h]
\psfrag{theta}[t][][\plab]{$\theta$}
\psfrag{Error}[t][][\plab]{$e$}
	%\centering
	\includegraphics[trim=0 0.6cm 0 0.2cm, clip=true,width=\figwid]{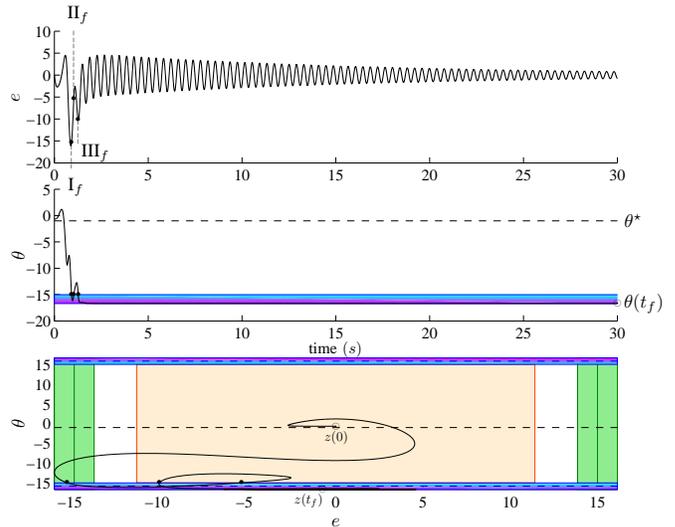} 
	\caption{Simulation results with reference input in \ref{refin:case_3}.}
	\label{fig:case_3}
	\end{figure}
\item  \label{refin:case_4} \underline{$r(t)=10 \; \forall t$:} Fig. \ref{fig:case_4} illustrates the corresponding limit cycle behavior of the trajectory. We observe that the trajectory first enters $\B$ at $t_{b}=1.80s$. Phase II then occurs for $1.80s \leq t < 9.82s$. Phase III occurs for $t_e=9.82s$, and then Phase I is repeated with the trajectory reentering $\B$ at $t_b=9.84s$. Phases I through III are repeated for all $ t \geq 9.84$, demonstrating Phase IV, a limit cycle behavior. The points at which the trajectory enters $\B$ (i.e. Phase II) are shown in orange, and the points at which the trajectory exits $\B$ (i.e. Phase III) are shown in purple.
\begin{figure}[h]
\psfrag{-}[l][][\marker][90]{$\boldsymbol{\scriptscriptstyle{---}}$}
\psfrag{+}[l][][\marker][90]{$\boldsymbol{\scriptscriptstyle{--------}}$}
\psfrag{/}[r][][\marker][90]{$\boldsymbol{\scriptscriptstyle{-----}}$}
\psfrag{X}[][][\marker]{\color[rgb]{0.5781,0,0.8242}$\bullet$}
\psfrag{Y}[][][\marker]{\color[rgb]{1,0.5469,0}$\bullet$}
\psfrag{I}[bl][][\pvar]{\setlength{\tabcolsep}{-1.75pt}\renewcommand{\arraystretch}{3.2}\begin{tabular}{c}$\text{I}_{f_1}$\end{tabular}}
\psfrag{II}[bl][][\pvar]{\setlength{\tabcolsep}{-3.5pt}\renewcommand{\arraystretch}{8.5}\begin{tabular}{c}$\text{II}_{f_1}$\end{tabular}}
\psfrag{III}[tl][][\pvar]{\setlength{\tabcolsep}{1.75pt}\renewcommand{\arraystretch}{2.65}\begin{tabular}{c}$\text{III}_{f_1}$\end{tabular}}
\psfrag{theta}[t][][\plab]{$\theta$}
\psfrag{Error}[t][][\plab]{$e$}
	%\centering
	\includegraphics[trim=0 0.6cm 0 0.2cm, clip=true,width=\figwid]{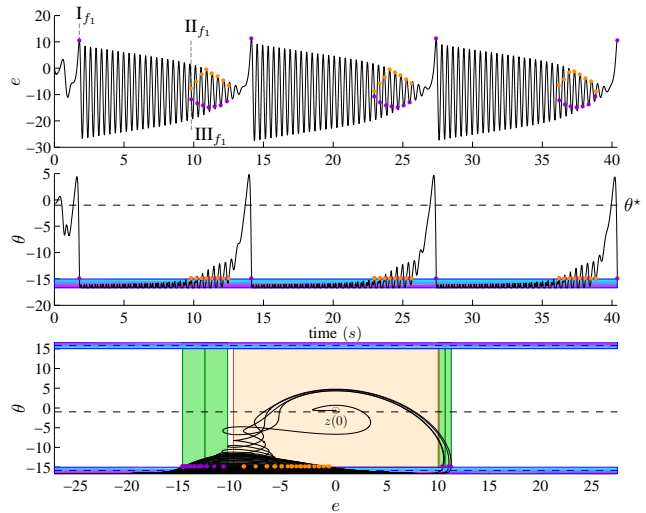} 
	\caption{Simulation results with reference input in \ref{refin:case_4}.}
\label{fig:case_4}
\end{figure}
\item \underline{$r(t)=2 \sin (\omega_0 t)$}, with $\omega_0$ undergoing a continuous sweep from 16.1 rad/s to 2 rad/s over a sixty second interval. The resulting trajectories are shown in phase-plane form in Fig. \ref{fig:case_5}b, for six different initial conditions labeled 1 through 6. It is observed that the trajectory behaves differently for each initial condition. Initial conditions 1 through 3 resulted in trajectories that remained in Phase II for all $t \geq t_b$. Initial condition 4 led to a trajectory with a finite number of occurrences of Phases I through III and finally settled in Region $A$. Initial conditions 5 and 6 stayed in Region $A$ for all $t \geq 0$. All steady state values are labeled as $1f$ through $6f$. \label{refin:case_5}
\begin{figure}[h]
\psfrag{time}[][][0.6]{time $(s)$}
\psfrag{theta}[t][][\plab]{$\theta$}
\psfrag{Error}[t][][\plab]{$e$}
\psfrag{H}[][][\marker]{$\odot$}
\psfrag{S}[][][\marker]{}
\psfrag{X}[][][\marker]{}
\psfrag{Y}[][][\marker]{$\odot$}
\psfrag{th*}[bl][r][\plab]{$\theta^\star$}
\psfrag{1H}[b][t][\intc]{$1$}
\psfrag{2H}[b][t][\intc]{$2$}
\psfrag{3H}[b][t][\intc]{$3$}
\psfrag{4H}[b][t][\intc]{$4$}
\psfrag{5H}[b][t][\intc]{$5$}
\psfrag{6H}[b][t][\intc]{$6$}
\psfrag{4F}[l][r][\intc]{$4f$}
\psfrag{5F}[tl][r][\intc]{$5f$}
\psfrag{6F}[l][r][\intc]{$6f$}
\psfrag{2F}[l][r][\intc]{$1f,2f,3f$}
\psfrag{1S}[l][r][\intc]{}
\psfrag{3S}[l][r][\intc]{}
\psfrag{4S}[][][0.3]{\color[rgb]{1,1,1}$4f$}
\psfrag{5S}[][b][0.3]{\color[rgb]{1,1,1}$5f$}
\psfrag{6S}[][][0.3]{\color[rgb]{1,1,1}$6f$}
\psfrag{2S}[t][][0.3]{\color[rgb]{1,1,1}$1f,2f,3f$}
	\includegraphics[trim=0 0.6cm 0 0.2cm, clip=true,width=\figwid]{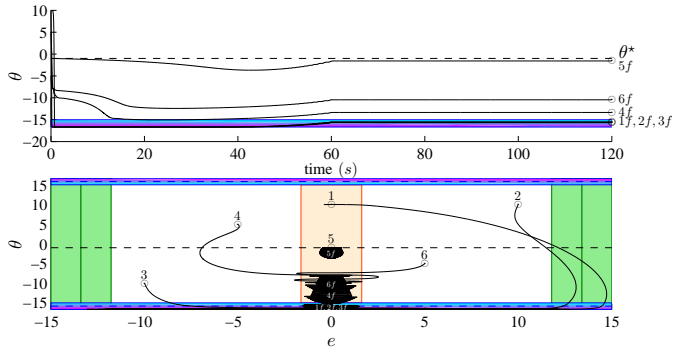}
	\caption{Simulation results with reference input in \ref{refin:case_5}.} 
	\label{fig:case_5}
\end{figure}
\end{enumerate}

\section{Summary}
In this paper, robust adaptive control of scalar plants in the presence of unmodeled dynamics is investigated. It is shown through analytic methods and simulation results that implementation of a projection algorithm in standard adaptive control law achieves global boundedness of the overall adaptive system for a class of unmodeled dynamics. The restrictions on the class of unmodeled dynamics and the projection bounds are explicitly calculated and demonstrated using the Rohrs counterexample.

\bibliographystyle{IEEEtran}
\bibliography{HHussain}

\begin{thebibliography}{1}

\bibitem{3}
K.J. Astrom.
\newblock Analysis of rohrs counterexamples to adaptive control.
\newblock In {\em The 22nd IEEE Conference on Decision and Control, 1983.},
  volume~22, pages 982 --987, dec. 1983.

\bibitem{5}
P.~Ioannou and K.~Tsakalis.
\newblock A robust direct adaptive controller.
\newblock {\em IEEE Transactions on Automatic Control}, 31(11):1033 -- 1043,
  nov 1986.

\bibitem{7}
H.~Kang, G.~Vachtsevanos, and F.L. Lewis.
\newblock Lyapunov redesign for structural convergence improvement in adaptive
  control.
\newblock {\em IEEE Transactions on Automatic Control}, 35(2):250 --253, feb
  1990.

\bibitem{Proj}
E.~{Lavretsky} and T.E. {Gibson}.
\newblock {Projection Operator in Adaptive Systems}.
\newblock {\em ArXiv e-prints}, December 2011.

\bibitem{Megumi}
M.~Matsutani, A.~Annaswamy, and E.~Lavretsky.
\newblock Guaranteed delay margins for adaptive control of scalar plants.
\newblock In {\em Conference on Decision and Control}, 2012.

\bibitem{10}
S.M. Naik, P.R. Kumar, and B.E. Ydstie.
\newblock Robust continuous-time adaptive control by parameter projection.
\newblock {\em IEEE Transactions on Automatic Control}, 37(2):182 --197, feb
  1992.

\bibitem{1}
K.S. Narendra and A.M. Annaswamy.
\newblock {\em Stable Adaptive Systems}.
\newblock Prentice Hall Information and System Sciences Series. Prentice-Hall,
  Inc, Englewood Cliffs, NJ, 1989.

\bibitem{11}
B.~Riedle and P.~Kokotovic.
\newblock A stability-instability boundary for disturbance-free slow adaptation
  with unmodeled dynamics.
\newblock {\em IEEE Transactions on Automatic Control}, 30(10):1027 -- 1030,
  oct 1985.

\bibitem{Rohrs}
C.~Rohrs, L.~Valavani, M.~Athans, and G.~Stein.
\newblock Robustness of continuous-time adaptive control algorithms in the
  presence of unmodeled dynamics.
\newblock {\em IEEE Transactions on Automatic Control}, 30(9):881 -- 889, sep
  1985.

\end{thebibliography}

\end{document}